\documentclass[letterpaper,10pt,conference]{ieeeconf}
\IEEEoverridecommandlockouts

\usepackage{cite}
\usepackage{float}
\usepackage{amsmath,amssymb,amsfonts}
\usepackage{mathrsfs}
\usepackage{graphicx}
\usepackage{textcomp}
\usepackage[ruled,vlined]{algorithm2e}
\usepackage{mathtools}
\usepackage{adjustbox}
\usepackage{makecell}
\usepackage{empheq} 
\usepackage{xcolor}
\usepackage[amsmath,thmmarks]{ntheorem}
\usepackage{hyperref}
\usepackage{enumerate}
\hypersetup{pdfborder={0 0 0},}
\def\BibTeX{{\rm B\kern-.05em{\sc i\kern-.025em b}\kern-.08em
    T\kern-.1667em\inner.7ex\hbox{E}\kern-.125emX}}
 \theoremheaderfont{\normalfont\bfseries} % 标题加粗
\theorembodyfont{\normalfont\itshape}    % 正文斜体
\theoremseparator{.}                     % 编号后添加句点

\newtheorem{remark}{Remark}
\newtheorem{theorem}{Theorem}
\newtheorem{lemma}{Lemma}

\newtheorem{assumption}{Assumption}
\usepackage{graphicx}
\usepackage{textcomp}
\usepackage{color}

\begin{document}
\title{\LARGE \bf A Global Convergence Analysis of Consensus ALADIN for Convex Optimization}
\author{Xu Du,  Shuting Wu,  Karl~H.~Johansson, and Apostolos I. Rikos$^*$
\thanks{$^*$Corresponding author.}
	\thanks{Xu Du and Apostolos I. Rikos are with the Artificial Intelligence Thrust of the Information Hub, The Hong Kong University of Science and Technology (Guangzhou), Guangzhou, China. 
    Apostolos I. Rikos is also affiliated with the Department of Computer Science and Engineering, The Hong Kong University of Science and Technology, Clear Water Bay, Hong Kong, China. E-mails: {\tt~\{michaelxudu, apostolosr\}@hkust-gz.edu.cn}. 
            }
            \thanks{Shuting Wu is with School of Mathematics and Statistics, North China University of Water Resources and Electric Power, Zhengzhou, China.  E-mail: \texttt{wushuting0126@163.com}.}
           % \thanks{Fumin Zhang is with the Department of Electronic and Computer Engineering, Hong Kong University of Science and Technology, Hong Kong SAR, China. E-mail: {\tt eefumin@ust.hk}.
            %} 
            \thanks{Karl H.~Johansson is with the Division of Decision and Control Systems, KTH Royal Institute of Technology, SE-100 44 Stockholm, Sweden. 
    He is also affiliated with Digital Futures, SE-100 44 Stockholm, Sweden. 
    E-mail:{\tt~kallej@kth.se}.
            }
            \thanks{The work of X.D. and A.I.R. was supported by the Guangzhou-HKUST(GZ) Joint Funding Scheme (Grant No. 2025A03J3960). The work of A.I.R. was also supported by the Guangdong Provincial Project (Grant No. 2024QN11G109).}
}

\maketitle

\begin{abstract}
Distributed optimization problems are pervasive in machine learning and optimal control. 
In this paper, we study smooth strongly convex distributed consensus optimization problems. We present a distributed optimization algorithm for consensus problems based on the Consensus Augmented Lagrangian Alternating Direction Inexact Newton (C-ALADIN) framework.
Our algorithm uses an auxiliary variable to decide when to update second-order information, enabling curvature exploitation without sacrificing global convergence. This contrasts with existing C-ALADIN methods, which require constant Hessian approximations and thus lose numerical advantages. Under smooth strong convexity, the algorithm converges globally, and the auxiliary variable converges sublinearly. Numerical experiments on logistic regression show that our algorithm outperforms baseline methods that use either fixed or updated Hessian information.
\end{abstract}
\textbf{Keywords:} Distributed Convex Optimization, Consensus ALADIN, Global Convergence    
%\end{IEEEkeywords}

\section{Introduction}\label{sec:introduction}
Distributed optimization  has received significant attention in recent years due to its wide range of applications, including machine learning \cite{zhou2026preconditioned}, \cite{ALADINlearning}, model predictive control (MPC) and moving horizon estimation (MHE) \cite{stomberg2025decentralized, wu2025time}, power grids \cite{dai2026distributed,engelmann2022distributed}, and signal processing \cite{boyd2011distributed},
\cite{rikos2023asynchronous}. 
Overall, distributed optimization enables efficient large-scale problem solving with improved scalability and data privacy \cite{2024_doostmohammadian_rikos_Johansson_survey}. %\todo{... fill one more} 

%\todo{this paragraph, please write better. why mention DRA if we do not do DRA? \\
%talk about consensus distributed optimization and split consensus distributed optimization into 2 parts -- primal decomposition and dual. and we do the latter}
Distributed optimization problems are often formulated with a finite-sum objective. Two common problem types are: (i) distributed resource allocation optimization problems, where local variables are affinely coupled; (ii) distributed consensus optimization (DCO) problems, where all local variables are constrained to be equal. This paper focuses on the latter. DCO algorithms typically involve two steps: (i) solving local subproblems, and (ii) exchanging information with a central coordinator or neighboring agents. Structurally, they can be classified into primal decomposition and dual decomposition methods \cite{decomposition}. This paper mainly focuses on the latter, in particular the Augmented Lagrangian based Alternating Direction Inexact Newton (ALADIN) method \cite{houska2016augmented, Du2025ACC},
 which exhibits fast convergence in many distributed optimization applications \cite{ALADIN-tool}.  ALADIN was originally proposed to enhance the convergence speed of the alternating direction method of multipliers (ADMM) \cite{boyd2011distributed, rikos2023asynchronous, du2026affine} and to achieve more robust convergence in non-convex applications, specifically for distributed resource allocation optimization problems. 
It inherits the distributed structure of ADMM and the fast convergence properties of sequential quadratic programming (SQP). \\
%which has shown excellent \todo{another word? ''excellent'' does not give information. ''fast'' maybe? or ''fast numerical performance against other algorithms ... etc in many} numerical performance in many distributed optimization applications \cite{ALADIN-tool}.\\
% Distributed optimization problems are mainly divided into two types based on the finite-sum objective function structure: (i) distributed resource allocation problems, where the variables of local agents are affinely coupled; and (ii) distributed consensus (DCO) optimization problems, where the variables are subject to consensus constraints. This paper focuses on the latter. Distributed optimization algorithms typically involve two steps: (i) solving local subproblems, and (ii) exchanging information with a central coordinator or neighboring agents. Structurally, they can be classified into primal decomposition and dual decomposition methods \cite{decomposition}. This paper mainly focuses on the latter, in particular the Augmented Lagrangian based Alternating Direction Inexact Newton (ALADIN) method \cite{houska2016augmented}, \cite{Du2025ACC}.\\
\textbf{Literature Review.} %\todo{in this part, we did not say why ''enabling Hessian updates is important'', we say that previous works do not allow it, but reviewers may ask why allowing this is important?}
ALADIN has been adopted in a wide range of research directions, including (i) power grids via spatial splitting \cite{Engelmann2019, dai2026distributed, Du2019, lanzadistributed}, (ii) MPC and MHE via time splitting \cite{Kouzoupis2016, stomberg2025decentralized, wu2025time, wang2026lightweight}, and (iii) structure splitting \cite{Shi2022, wang2025aladin}, where distributed solution methods help mitigate the curse of dimensionality. 
Convergence results for ALADIN can be summarized as follows. For non-convex problems satisfying the second-order sufficient condition (SOSC), Hessian updates are admissible, and ALADIN inherits local convergence from SQP \cite{houska2016augmented}. For convex problems, the existing ALADIN theory guarantees global convergence via a Lyapunov function that depends on constant Hessian approximations. 
Constant Hessians, however, cause ALADIN to lose the numerical acceleration of second-order information \cite{houska2017convex}.
Updating the Hessian, on the other hand, prevents the construction of the required Lyapunov function, and hence the global convergence theory of ALADIN cannot be established. Therefore, it is essential to design a method that allows Hessian updates without sacrificing the global convergence of ALADIN. 
Moreover, standard second-order methods typically require damping techniques (line search or trust region) for global convergence, which cause significant overhead and complicate coordination in ALADIN-type algorithms. As discussed in \cite[Section~4]{houska2017convex}, it is crucial to design a method that combines Hessian updates with global convergence to achieve faster numerical convergence without sacrificing global guarantees.
To date, no existing work within the ALADIN framework simultaneously addresses two challenges: (i) a structure specifically designed for DCO problems, and (ii) updating Hessian information while preserving global convergence. The first challenge was addressed by Consensus ALADIN (C-ALADIN) \cite{du2023consensus, Du2025ACC, Du2025ECC, du2025decentralized, han2026mix}, which replaces the coupled quadratic program (QP) in ALADIN with a consensus QP, thereby directly handling the consensus constraints.
Regarding the second challenge, to the author's knowledge, it remains unresolved. The only existing work is a heuristic attempt in \cite[Section~4.2]{houska2017convex}, which proposes a triggering condition for Hessian updates based on an $L_1$ merit function and heuristic parameters. However, this approach lacks rigorous convergence analysis and is not tailored to DCO problems. Consequently, no ALADIN-type framework currently permits Hessian updates while retaining global convergence guarantees for DCO problems.
% \AR{Regarding the second challenge ... to the author's knowledge ... only \cite[Section~4.2]{houska2017convex}.  
% In this work a triggering condition for Hessian updates based on an $L_1$ merit function and heuristic parameters was proposed in the convex setting. 
% However, this approach lacks rigorous convergence analysis for the optimization variables and is not designed for DCO problems. 
% As a result, to the best of our knowledge, no ALADIN‑type framework permits Hessian updates while retaining global convergence guarantees for DCO problems.} 
% \todo{The second challenge, however, \AR{to the author's knowledge} remains unresolved. \todo{we say it remains unresolved and then we present an approach?}
% To address it, a triggering condition for Hessian updates based on an $L_1$ merit function and heuristic parameters was proposed in the convex setting \cite[Section~4.2]{houska2017convex}. 
% Yet this approach lacks rigorous convergence analysis for the optimization variables and is not designed for DCO problems. 
% Consequently, to the best of our knowledge, no ALADIN‑type framework permits Hessian updates while retaining global convergence guarantees for DCO problems.} 
\\
% \todo{
% Please emphasize more why filling this gap is important
% }
%\todo{this paragraph please re-write again. the beginning of this paragraph is not literature review. Literature review needs to say a story of the problems/gaps that other works solved. End up at the most recent works, and them mention the gap. And emphasize on why adressing the gap is important.}
\textbf{Main Contributions.} Motivated by the aforementioned limitations in the existing literature, this paper proposes a C-ALADIN-based algorithmic framework. 
To the best of our knowledge, this is the first C-ALADIN framework that guarantees global convergence while allowing Hessian updates. 
Our main contributions are summarized as follows.\\
\textbf{A.} 
We propose CAPTAIN (C-ALADIN with parameter tuning and adaptive inexact Newton), a novel distributed optimization algorithm, to solve DCO problems (Algorithm~\ref{alg: AC-ALADIN}).  In CAPTAIN, an auxiliary variable governs the timing of Hessian updates. In contrast to the heuristic condition in \cite[Section~4.2]{houska2017convex}, our update rule relies on the global objective and an adaptive threshold. For smooth and strongly convex DCO problems, we establish global convergence by showing that the auxiliary variable is updated only finitely many times (Theorem~\ref{thm:z_behavior}) and that the minimum distance between consecutive updates decays sublinearly (Theorem~\ref{them: sublinear}). Once the auxiliary variable stops being updated, the Hessian approximations remain fixed, and CAPTAIN reduces to the Hessian-invariant variant of C-ALADIN, which is globally convergent (Lemma~\ref{lemma 1}). This guarantees global convergence of CAPTAIN. \\
\textbf{B.} We evaluate the numerical performance of CAPTAIN on a smooth strongly convex $\ell_2$-regularized logistic regression application. 
Compared with other state-of-the-art DCO algorithms, CAPTAIN exhibits faster convergence, reaching the optimal solution in significantly fewer iterations. %We evaluate the numerical performance of CAPTAIN on a smooth strongly convex $\ell_2$-regularized logistic regression application. When compared with other state-of-the-art DCO algorithms, CAPTAIN shows a distinct advantage in speed, consistently reaching the optimal solution in significantly fewer iterations. The objective function decreases sharply from the very first steps, and the algorithm rapidly attains the optimal value. 

\textit{Notation.} In this paper, $(\cdot)^{[k]}$ denotes the value at the $k$-th iteration of the algorithm. For a vector $\xi \in \mathbb{R}^n$, the symbols $\|\xi\|$ and $\|\xi\|_\infty$ denote the Euclidean norm and the infinity norm, respectively. Moreover, for $M \succ 0$, $\|\xi\|_{M}^2 = \xi^{\top} M \xi$ represents the squared Mahalanobis distance. Furthermore, $|\mathcal{S}|$ indicates the cardinality of a countable set $\mathcal{S}$.

\section{Problem Formulation}\label{sec: problem}
% \todo{write what we want to do in this paper. \\
% In this paper we aim to design an algorithm that solves... Furthermore, our algorithm will ... (characteristics of the algorithm). 
% }

%\todo{this above paragraph after the (2)... and also refer to (2)}

Let $\mathcal{V}$ denote the set of agents. A DCO problem with $N = |\mathcal{V}|$ agents can be formulated as
\begin{equation}\label{eq: merit}\small
\begin{aligned}
   \min_{y \in \mathbb R^{n}} \quad &  \Phi(y) \doteq \sum_{i=1}^{N} f_i(y),
\end{aligned}
\end{equation}
where $f_i(\cdot): \mathbb R^n \rightarrow \mathbb R$ is the local objective function of agent $i \in \mathcal V$.
To enable distributed computation, we reformulate \eqref{eq: merit} by introducing local variables $x_i$ for each agent:
\begin{equation}\label{eq: DC}\small
    \begin{aligned}
        \min_{x_i\in \mathbb R^n,\, i \in \mathcal V} \quad  \sum_{i=1}^{N} f_i(x_i) \qquad
        \text{s.t.}\;\;  x_i = y, \quad \forall i \in \mathcal V.
    \end{aligned}
\end{equation}
%\todo{Assumption 1 before section IV-A} 
In this paper we aim to design an algorithm that solves the DCO problem in \eqref{eq: DC} with guaranteed global convergence.
% Furthermore, our algorithm will provide global convergence guarantees while allowing adaptive Hessian updates for convex problems, a capability missing in existing C-ALADIN frameworks.}

\section{Preliminaries on Consensus ALADIN}\label{sec: C-ALADIN}
The augmented Lagrangian function of \eqref{eq: DC} can be formulated as
\begin{equation}\label{eq: AL}%\small
    \mathcal L (x, y, \lambda) 
    \doteq \sum_{i=1}^N\left( 
    f_i(x_i) 
    + \lambda_i^\top (x_i - y ) 
    + \frac{1}{2}\left\|x_i - y\right\|_{M_i}^2
    \right), 
\end{equation}
where $x = [x_1^\top, x_2^\top, \dots, x_N^\top ]^\top$ collects the local primal variables, $\lambda = [\lambda_1^\top, \lambda_2^\top, \dots, \lambda_N^\top]^\top$ collects the dual variables, and $M_i \succ 0$ is a symmetric positive definite matrix for each agent.
Based on \eqref{eq: AL}, we first recall the derivative-free variant of C-ALADIN, termed DFC-ALADIN \cite{Du2025ACC}, which is designed for convex DCO problems and keeps the Hessian approximation $M_i$ constant. In contrast, when the matrices $M_i$ are updated as approximate Hessians for every agent $i$ at each iteration, we refer to the algorithm as standard C-ALADIN.
The DFC-ALADIN iteration is given by:
\begin{subequations}\label{eq: CALADIN}\small
\begin{align}
x_i^{[k+1]} &
= \mathop{\arg\min}_{x_i, 
\forall i \in \mathcal V, } 
\; f_i(x_i) 
+ \left(\lambda_i^{[k]}\right)^\top x_i 
+ \frac{1}{2}\left\|x_i - y^{[k]}\right\|_{M_i}^2,\label{eq: local x update}\\
g_i & = M_i\left(y^{[k]} - x_i^{[k+1]} \right) - \lambda_i^{[k]}, \label{eq: local g update} \\
y^{[k+1]} &= \left(\sum_{i=1}^N M_i \right)^{-1} \left(\sum_{i=1}^N M_i x_i^{[k+1]} - g_i \right), \label{eq: y update} \\
\lambda_i^{[k+1]} 
&= M_i\left(x_i^{[k+1]} - y^{[k+1]}\right) -g_i, \label{eq: dual update}
\end{align}
\end{subequations}
for every agent $i \in \mathcal V$, with $k$ denoting the iteration index.
% The algorithm consists of four main steps. 
% In \eqref{eq: local x update}, each agent updates its local primal variable $x_i^{[k+1]}$. Equation \eqref{eq: local g update} evaluates the local gradient.
% The global variable $y^{[t+1]}$ is updated via \eqref{eq: y update}, and the dual variables $\lambda_i$ are updated via \eqref{eq: dual update}. 
Note that, the updates of \eqref{eq: y update} and \eqref{eq: dual update} are the closed-form solutions of the following consensus QP problem, which performs the coordination in C-ALADIN:
\begin{equation}\label{eq: consensus QP}\small
    \begin{aligned}
        \min_{x_i,\, i \in \mathcal V} \quad & \sum_{i=1}^{N} \frac{1}{2} \Delta x_i^\top M_i \Delta x_i + g_i^\top \Delta x_i \\
        \text{s.t.}\quad\;\; & x_i^{[k+1]} + \Delta x_i = y, \quad \forall i \in \mathcal V. 
    \end{aligned}
\end{equation}
As shown in \cite{Du2025ACC, Du2025ECC, du2025decentralized}, for smooth non-convex problems satisfying SOSC, standard C-ALADIN achieves local convergence. In contrast, for convex problems, global convergence is guaranteed only for DFC-ALADIN, i.e., when $M_i$ is fixed. In this paper, we propose an algorithm that allows Hessian updates while preserving global convergence for convex DCO.

\section{The Proposed Algorithm: CAPTAIN}\label{sec: ALgorithm}
In this section, we propose CAPTAIN, a novel distributed numerical algorithm for solving \eqref{eq: DC}. The algorithm is presented as Algorithm~\ref{alg: AC-ALADIN} in Section~\ref{sec: GC-ALADIN}, and its global convergence is established in Section~\ref{sec: convergence}.

For the subsequent algorithm design and convergence analysis, we first make the following assumption, which is fundamental for the rest of the paper.
\begin{assumption}\label{ass}
For each $i \in \mathcal{V}$, let the objective function $f_i$ be closed, proper, smooth, strongly convex, and bounded from below. Moreover, problem \eqref{eq: DC} is feasible and strong duality holds. Specifically, there exists a strong convexity constant $\mu_i > 0$ such that for all $x_\alpha, x_\beta \in \mathbb{R}^n$,
\begin{equation}\label{eq: mu}
    f_i(x_\alpha) + \nabla f_i(x_\alpha)^\top (x_\beta - x_\alpha) + \frac{\mu_i}{2}\|x_\beta - x_\alpha\|^2 \leq f_i(x_\beta).
\end{equation}
\end{assumption}
Assumption~\ref{ass} is necessary for the design of our algorithm and the establishment of its global convergence. Strong convexity and smoothness guarantee the uniqueness of the optimal solution to \eqref{eq: DC}. Furthermore, boundedness from below supports the convergence analysis and ensures finite-time termination.

\subsection{Algorithm Development}\label{sec: GC-ALADIN}

The proposed CAPTAIN is presented as Algorithm~\ref{alg: AC-ALADIN} and illustrated in Fig.~\ref{fig: flow}. 
\begin{algorithm}[ht] \small
	\caption{CAPTAIN: Consensus ALADIN with Parameter Tuning and Adaptive Inexact Newton}
    \textbf{Initialization.}  Chosen dual variable $ \lambda_i^{[1]} \in \mathbb{R}^n$, and global variable estimation $ z^{[1]} \in \mathbb{R}^n$, $ y^{[1]} = z^{[1]} $, for each node $i \in \mathcal{V}$.  Set $M_i = I$, $\gamma^{[1]} = N$ and $\epsilon>0$. Assumption \ref{ass} holds.
    \\
    \textbf{Iteration.} 
	\begin{enumerate}
	\item  Optimize $x_i^{[k+1]}$ as \eqref{eq: local x update}.
    \item Evaluate the gradient $g_i$ of $f_i$ as \eqref{eq: local g update}.

    \item  Calculate the global variable estimation as  \eqref{eq: y update}.

    \item Update the dual variable $\lambda_i$ as \eqref{eq: dual update}.

    \item If $\Phi\left(y^{[k+1]}\right)< \Phi\left(z^{[t]}\right) - \frac{\gamma^{[t]}}{2}\left\|y^{[k+1]} -  z^{[t]}   \right\|^2$ and $\left\| y^{[k+1]} -  z^{[t]}\right\|\geq \epsilon$,
    set  $z^{[t+1]} = y^{[k+1]}$ and update the Hessian matrices
    \begin{equation*}\label{eq: hesian approximation}
        M_i \approx \nabla^2 f_i\left(z^{[t+1]}\right) \succ \mu_i I\succ0,\; \forall i\in \mathcal V.
    \end{equation*} Set 
     $\gamma^{[t+1]} = \rho_{\text{min}}\left(\sum_{i=1}^N M_i\right) $, where  $\rho_{\text{min}}(\cdot)$ denotes the smallest eigenvalue. 
	\end{enumerate}
    \textbf{Output.} Optimal solution $y^* = \lim_{k\to\infty} y^{[k]}$ of \eqref{eq: DC}.
	\label{alg: AC-ALADIN}
\end{algorithm}
The algorithm consists of two phases: (i) steps 1) – 4) constitute the optimization phase, which are the same as those of DFC-ALADIN (see \eqref{eq: CALADIN}); and (ii) step 5) updates the auxiliary variable \(z^{[t+1]} = y^{[k+1]}\), the local Hessian approximations \(M_i\), and the parameter \(\gamma^{[t]}\) whenever both of the following conditions hold:
\begin{equation}\label{eq: decrease}\small
\left\{\begin{split}
 &   \Phi\left(y^{[k+1]}\right) < \Phi\left(z^{[t]}\right) - \frac{\gamma^{[t]}}{2}\left\|y^{[k+1]} - z^{[t]}\right\|^2,\\
  &  \left\| y^{[k+1]} - z^{[t]}\right\| \geq \epsilon.
\end{split} \right.
\end{equation} %The only difference between \eqref{eq: global y} in step 3) and \eqref{eq: dual update} is that \eqref{eq: global y} denotes the closed-form solution of the following consensus QP:
% \begin{equation}\label{eq: consensus QP}
%     \begin{aligned}
%         \min_{x_i,\, i \in \mathcal V} \quad & \sum_{i=1}^{N} \frac{1}{2} \Delta x_i^\top M_i \Delta x_i + g_i^\top \Delta x_i + \frac{\gamma^{[t]}}{2}\left\|y -y^{[k]}\right\|^2 \\
%         \text{s.t.}\quad\;\; & x_i^{[k+1]} + \Delta x_i = y, \quad \forall i \in \mathcal V, 
%     \end{aligned}
% \end{equation}
% which ensures stable numerical updates of the global variable $y$.
%  In detail, step 5) updates the auxiliary variable \(z^{[t+1]} = y^{[k+1]}\), the local Hessian approximations \(M_i\), and the parameter \(\gamma^{[t]}\) whenever both of the following conditions hold:
% \begin{equation}\label{eq: decrease}\small
% \left\{\begin{split}
%  &   \Phi\left(y^{[k+1]}\right) < \Phi\left(z^{[t]}\right) - \frac{\gamma^{[t]}}{2}\left\|y^{[k+1]} - z^{[t]}\right\|^2,\\
%   &  \left\| y^{[k+1]} - z^{[t]}\right\| \geq \epsilon.
% \end{split} \right.
% \end{equation}
% and
% \begin{equation}\label{eq: decrease2}\small
%     \left\| y^{[k+1]} - z^{[t]}\right\| \geq \epsilon.
% \end{equation}
Therefore, Hessian information is updated only when a sufficient decrease is achieved and the step size is non‑negligible, ensuring efficient use of second‑order information.
Moreover, the adaptive choice of \(\gamma^{[t]}\) directly governs the update of \(z^{[t]}\). Setting \(\gamma^{[t]} = \rho_{\min}\bigl(\sum_i M_i\bigr) > 0\) ensures a strictly positive threshold, guaranteeing sufficient descent in the global objective for each accepted update.
This adaptive mechanism of $\gamma$ has two effects: 
\begin{itemize}
    \item When the Hessian approximations have large eigenvalues (i.e., provide strong curvature), \(\gamma^{[t]}\) is large, preventing overly aggressive updates of \(z^{[t]}\) and stabilizing the iterations.
    \item When the Hessian approximations have small eigenvalues (i.e., provide weak curvature), \(\gamma^{[t]}\) is small, allowing \(z^{[t]}\) to follow \(y^{[k+1]}\) more quickly and accelerating convergence.
\end{itemize}
\begin{figure}[ht]
	\centering
\includegraphics[width=0.5\textwidth,height=0.46\textheight]{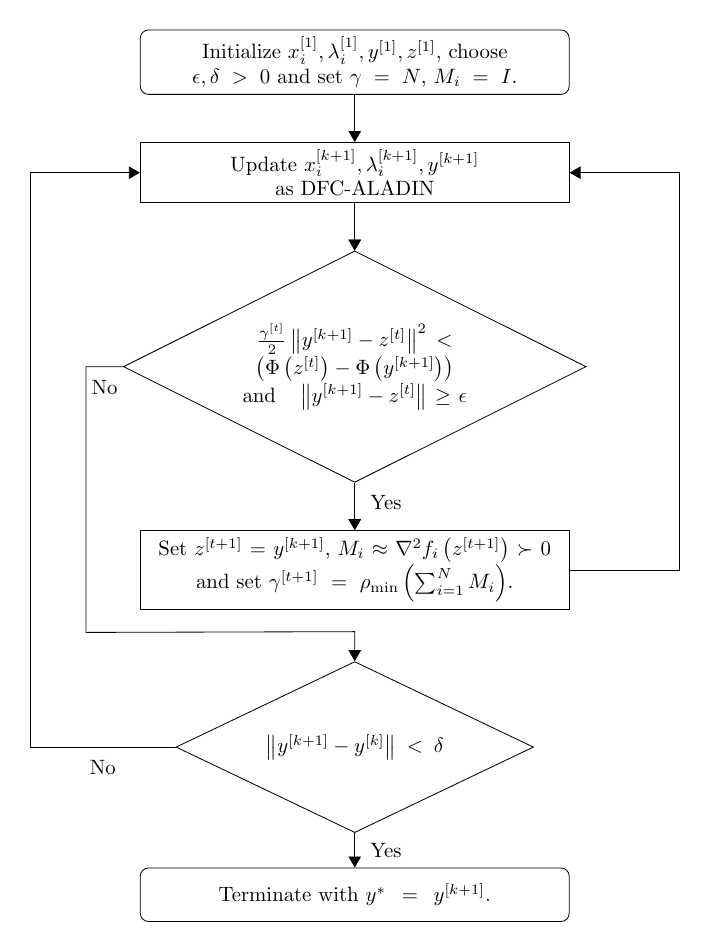}
	\caption{Flow chart of CAPTAIN.}
	\label{fig: flow}
\end{figure}
In this way, \(\gamma^{[t]}\) automatically balances the descent requirement and update speed, leading to monotone convergence of \(z^{[t]}\), which in turn ensures a monotonic decrease of the global objective value in \eqref{eq: merit}.\\
\textbf{Comparison with Previous Works.} As shown in Section~\ref{sec: C-ALADIN}, when problem \eqref{eq: DC} satisfies SOSC and the Hessian matrices $M_i$ are updated at every iteration, C-ALADIN \cite{Du2025ACC} yields only local convergence; global convergence theory is missing. 
Furthermore, DFC-ALADIN \eqref{eq: CALADIN} requires a constant Hessian $M_i\succ 0$ for each agent to guarantee global convergence. In contrast, Algorithm~\ref{alg: AC-ALADIN} maintains global convergence while adaptively updating Hessian approximations, which leads to faster numerical convergence, as demonstrated in the next section.
A similar triggering condition for Hessian updates was proposed in \cite[Section~4.2]{houska2017convex} for resource-allocation ALADIN, relying on an $L_1$ merit function and heuristic parameters. However, it lacks a convergence proof for the optimization variables. The same update strategy is applied to power grids in \cite{dai2026distributed}; both works address distributed resource allocation, not DCO. Other dual-decomposition algorithms (e.g., C-ADMM \cite{boyd2011distributed}) do not exploit second-order information and share the convergence theory of DFC-ALADIN. %Numerical comparisons of the aforementioned methods are presented in Section~\ref{sec: numerical}. %Therefore, Algorithm~\ref{alg: AC-ALADIN}   improves numerical performance.

 % \todo{can we mention more works here?} 

\subsection{Global Convergence Analysis}\label{sec: convergence}
The global convergence of Algorithm~\ref{alg: AC-ALADIN} is established in this section. We first introduce a lemma that serves as the foundation for the subsequent convergence results.

\begin{lemma}\label{lemma 1}
   Let Assumption~\ref{ass}  be satisfied for \eqref{eq: DC}. Then the convex problem \eqref{eq: DC}, when solved by DFC-ALADIN (see \eqref{eq: CALADIN}), converges globally with linear rate.
\end{lemma}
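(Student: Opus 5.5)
The plan is to rewrite DFC-ALADIN \eqref{eq: CALADIN} with fixed $M_i\succ0$ as a fixed-point iteration of a Peaceman--Rachford type operator in the $M$-weighted metric, where $M=\mathrm{blkdiag}(M_1,\dots,M_N)$. I would then show that this operator is a strict contraction under Assumption~\ref{ass}. I read "smooth" as $\nabla f_i$ being $L_i$-Lipschitz; I expect this to be needed for a linear rate.

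\textbf{Step 1 (simplify the updates).} The first-order optimality condition of \eqref{eq: local x update} is $\nabla f_i(x_i^{[k+1]})+\lambda_i^{[k]}+M_i(x_i^{[k+1]}-y^{[k]})=0$. Comparing with \eqref{eq: local g update}, this gives $g_i=\nabla f_i(x_i^{[k+1]})$. Introduce the auxiliary sequence
\[
v_i^{[k]} \doteq y^{[k]}-M_i^{-1}\lambda_i^{[k]} .
\]
Then $x_i^{[k+1]}=J_i(v_i^{[k]})$, where $J_i\doteq(I+M_i^{-1}\nabla f_i)^{-1}$ is the resolvent of $f_i$ in the $M_i$-metric. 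Next set $r_i\doteq x_i^{[k+1]}-M_i^{-1}g_i = 2x_i^{[k+1]}-v_i^{[k]}$, which is the reflection $R_i=2J_i-I$ applied to $v_i^{[k]}$. By \eqref{eq: y update}, $y^{[k+1]}$ is the $M$-weighted projection $P_C$ of $r=(r_1,\dots,r_N)$ onto the consensus subspace $C=\{x_1=\dots=x_N\}$. Finally, \eqref{eq: dual update} gives $v_i^{[k+1]}=2y^{[k+1]}-r_i$. Together these yield
\[
v^{[k+1]}=(2P_C-I)\,(2J-I)\,v^{[k]} \doteq T v^{[k]} .
\]
This is Peaceman--Rachford splitting applied to $\sum_i f_i(x_i)$ and the indicator of $C$. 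Its fixed points $v^*$ correspond to $y^*$ and $\lambda_i^*=-\nabla f_i(y^*)$, which exist by strong duality and strong convexity.

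\textbf{Step 2 (contraction).} The reflection $2P_C-I$ is an isometry in $\|\cdot\|_M$, since $P_C$ is an $M$-orthogonal projection. The key estimate is that $2J-I$ is a strict contraction in $\|\cdot\|_M$. In the metric $M_i$, the operator $M_i^{-1}\nabla f_i$ is $\mu_i/\rho_{\max}(M_i)$-strongly monotone and $L_i/\rho_{\min}(M_i)$-Lipschitz. Expanding
\[
\|R_i a-R_i b\|_{M_i}^2=\|a-b\|_{M_i}^2-4\langle J_ia-J_ib,\; M_i^{-1}(\nabla f_i(J_ia)-\nabla f_i(J_ib))\rangle_{M_i}
\]
and combining strong monotonicity with cocoercivity should give a factor $q_i<1$. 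The resulting bound has the form $q_i^2=1-\frac{4\mu_i\rho_{\min}(M_i)}{(\rho_{\max}(M_i)+L_i)^2}$, or a similar expression obtained from the standard Giselsson--Boyd bound. This step is the main obstacle. The metric mismatch between $M_i$ and the Euclidean constants must be handled carefully, and strong convexity alone does not give a contraction: $L$-smoothness is needed to control the term $\|\nabla f_i(J_ia)-\nabla f_i(J_ib)\|^2$. I would first try the scaled-variable substitution $w=M_i^{1/2}v$, which reduces the problem to the identity metric.

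\textbf{Step 3 (transfer the rate).} With $q=\max_i q_i<1$ we obtain $\|v^{[k]}-v^*\|_M\le q^{k-1}\|v^{[1]}-v^*\|_M$, so $v^{[k]}$ converges R-linearly from any initialization. Since $J$, $P_C$ and the maps $v\mapsto x,y,\lambda$ are Lipschitz (for example, $\lambda_i=M_i(y-v_i)$), the iterates $x_i^{[k]}$, $y^{[k]}$ and $\lambda_i^{[k]}$ converge globally at the same linear rate to the unique primal--dual solution of \eqref{eq: DC}. This proves Lemma~\ref{lemma 1}.
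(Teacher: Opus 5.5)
Your argument is correct, and it takes a genuinely different route from the paper. The paper gives no proof of its own: it defers to \cite[IV.A]{Du2025ACC} and \cite[Theorem 3]{du2023consensus}, i.e., to the existing Lyapunov-type analysis of C-ALADIN with constant $M_i$. Your reduction checks out. Optimality in \eqref{eq: local x update} gives $g_i=\nabla f_i(x_i^{[k+1]})$, \eqref{eq: y update} is the $M$-orthogonal projection onto the consensus subspace, and $v^{[k+1]}=(2P_C-I)(2J-I)v^{[k]}$. Your explicit $q_i^2$ is also a valid (conservative) bound: it dominates both $\bigl(\tfrac{1-\sigma_i}{1+\sigma_i}\bigr)^2$ and $\bigl(\tfrac{\beta_i-1}{\beta_i+1}\bigr)^2$ with $\sigma_i=\mu_i/\rho_{\max}(M_i)$ and $\beta_i=L_i/\rho_{\min}(M_i)$.

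The two routes are closer than they look. Summing \eqref{eq: dual update} and using \eqref{eq: y update} gives $\sum_i\lambda_i^{[k]}=0$ for $k\ge 2$, and also $\sum_i\lambda_i^*=0$. The cross term therefore vanishes, and
\[
\bigl\|v^{[k]}-v^*\bigr\|_M^2=\bigl\|y^{[k]}-y^*\bigr\|^2_{\sum_i M_i}+\sum_{i=1}^N\bigl\|\lambda_i^{[k]}-\lambda_i^*\bigr\|^2_{M_i^{-1}}.
\]
This is the kind of weighted primal--dual distance that ALADIN-type convex analyses use as a Lyapunov function, and your contraction of $2J-I$ is what turns its monotone decrease into a linear rate.

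Your route buys a self-contained proof with an explicit factor $q=\max_i q_i$ in terms of $\mu_i$, $L_i$ and the spectrum of the frozen $M_i$. It holds from any $(y^{[1]},\lambda^{[1]})$, which is exactly what the Remark after Theorem~\ref{thm:z_behavior} relies on when CAPTAIN freezes $M_i\approx\nabla^2 f_i(z^{[T]})$ at an arbitrary state. It also exposes the Lipschitz-gradient hypothesis hidden in ``smooth''. If only local Lipschitz continuity is available (e.g., $f_i\in C^2$), the argument still works with an initialization-dependent factor: $T$ is nonexpansive for any convex $f_i$, so $\|v^{[k]}-v^*\|_M$ is nonincreasing and the $x$-iterates stay in a bounded set, where strong monotonicity plus a local Lipschitz bound still gives a contraction. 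The citation route is shorter but leaves these hypotheses and constants implicit.
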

\begin{proof}
    See \cite[IV.A]{Du2025ACC} and \cite[Theorem 3]{du2023consensus}.
\end{proof}

The following theorem formally states the main convergence result for the auxiliary variable $z$ in Algorithm~\ref{alg: AC-ALADIN}.
\begin{theorem}%[Convergence of the auxiliary variable $z$]
\label{thm:z_behavior}
Let Assumption~\ref{ass} hold for problem~\eqref{eq: DC}, and let the auxiliary variable $z^{[t]}$ in Algorithm~\ref{alg: AC-ALADIN} be updated according to the sufficient decrease condition \eqref{eq: decrease} with 
\(\gamma^{[t]} = \rho_{\min}\left(\sum_{i=1}^N M_i\right) > 0\). 
Then the following statements hold:
\begin{enumerate}
    \item The sequence of auxiliary variables $\{z^{[t]}\}$ is updated only finitely many times.
    \item At its final update, the auxiliary variable $z^{[T]}$ either coincides with the global optimum $y^*$, or satisfies $\left\|y^{[k+1]}-z^{[T]}\right\|<\epsilon$ (which violates the second condition of \eqref{eq: decrease}), or reaches a point where any further update would violate the first condition of \eqref{eq: decrease}. In the latter case, the distance to the optimum satisfies
\begin{equation}\label{eq:z_neighborhood}\small
    \left\| z^{[T]} - y^* \right\| \geq \sqrt{\frac{2}{\gamma^{[T]}} \bigl( \Phi(z^{[T]}) - \Phi(y^*) \bigr) }.
\end{equation}
\end{enumerate}
\end{theorem}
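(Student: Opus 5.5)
Part 1 follows from a telescoping bound on the global objective $\Phi$ along the accepted updates of $z$. Part 2 is then a case split on the reason the last accepted update is never followed by another. Every accepted update of step 5) satisfies both conditions of \eqref{eq: decrease}. This gives
\begin{equation*}
\Phi\bigl(z^{[t+1]}\bigr) < \Phi\bigl(z^{[t]}\bigr) - \frac{\gamma^{[t]}}{2}\bigl\|z^{[t+1]}-z^{[t]}\bigr\|^2 \le \Phi\bigl(z^{[t]}\bigr) - \frac{\gamma^{[t]}}{2}\epsilon^2 .
\end{equation*}

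The key ingredient is a uniform positive lower bound on $\gamma^{[t]}$. At initialization $\gamma^{[1]}=N$. After each update we have $M_i \succ \mu_i I$, so $\gamma^{[t]}=\rho_{\min}(\sum_i M_i) \ge \sum_i \mu_i$. Hence $\gamma^{[t]}\ge \underline{\gamma} \doteq \min\{N,\sum_i\mu_i\}>0$ for all $t$. Summing the decrease over the first $T$ accepted updates gives
\begin{equation*}
\Phi\bigl(z^{[1]}\bigr) - \inf\Phi \;\ge\; \sum_{t=1}^{T-1}\bigl(\Phi(z^{[t]})-\Phi(z^{[t+1]})\bigr) \;\ge\; (T-1)\frac{\underline{\gamma}\,\epsilon^2}{2}.
\end{equation*}
By Assumption~\ref{ass}, $\Phi$ is bounded from below; in fact it attains its minimum $\Phi(y^*)$ at a unique point, by strong convexity. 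Therefore
\begin{equation*}
T \le 1 + \frac{2\bigl(\Phi(z^{[1]})-\Phi(y^*)\bigr)}{\underline{\gamma}\,\epsilon^2} < \infty,
\end{equation*}
which proves part 1. The main obstacle is this lower bound on $\gamma^{[t]}$. The statement relies on the approximation $M_i\approx\nabla^2 f_i$ being chosen with $M_i\succ\mu_i I$, as written in step 5) of Algorithm~\ref{alg: AC-ALADIN}. I would state this explicitly as the property being used, since without it the threshold could degenerate. The second condition, $\|y^{[k+1]}-z^{[t]}\|\ge\epsilon$, is what turns the strict decrease into a decrease by a fixed quantum.

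For part 2, let $z^{[T]}$ be the final value; it exists by part 1. After this point, step 5) never fires again. At each later iteration, then, either the second condition of \eqref{eq: decrease} fails, i.e.\ $\|y^{[k+1]}-z^{[T]}\|<\epsilon$, or the first condition fails. If $z^{[T]}=y^*$, no point has strictly smaller objective, so the first condition must fail; this is the first alternative of the statement. Otherwise, suppose the failure is of the first kind for the candidate $y^*$. That is, moving to $y^*$ would violate the sufficient decrease condition:
\begin{equation*}
\Phi(y^*) \ge \Phi\bigl(z^{[T]}\bigr) - \frac{\gamma^{[T]}}{2}\bigl\|y^*-z^{[T]}\bigr\|^2 .
\end{equation*}
Rearranging gives $\|z^{[T]}-y^*\|^2 \ge \frac{2}{\gamma^{[T]}}\bigl(\Phi(z^{[T]})-\Phi(y^*)\bigr)$. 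Taking square roots yields \eqref{eq:z_neighborhood}; the right-hand side is well defined because $\Phi(z^{[T]})\ge\Phi(y^*)$.

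To connect this with the actual iterates, I would use Lemma~\ref{lemma 1}. Once $z$ freezes, the $M_i$ are fixed and steps 1)–4) are exactly DFC-ALADIN, so $y^{[k]}\to y^*$. Since $\Phi$ is continuous, the limiting candidate point is $y^*$. Non-triggering at all large $k$ then passes to the limit: the inequality $\Phi(y^{[k+1]})\ge \Phi(z^{[T]})-\frac{\gamma^{[T]}}{2}\|y^{[k+1]}-z^{[T]}\|^2$ holds for infinitely many $k$, and it is preserved as $k\to\infty$. This justifies evaluating the violated condition at $y^*$. The remaining case, in which $\|y^{[k+1]}-z^{[T]}\|<\epsilon$ eventually, is exactly the $\epsilon$-alternative stated in the theorem.
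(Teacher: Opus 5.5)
Your proof is correct and follows the paper's argument. Part 1 uses a fixed decrease quantum $\frac{\gamma_{\min}}{2}\epsilon^2$ per accepted update together with boundedness of $\Phi$ from below, and part 2 uses the same three-way case split with the decrease condition rearranged at $y^*$. Two of your additions supply justification the paper's proof omits. Your explicit bound $\underline{\gamma}=\min\{N,\sum_i\mu_i\}$ covers the initial value $\gamma^{[1]}=N$. Your use of Lemma~\ref{lemma 1} gives $y^{[k]}\to y^*$ after $z$ freezes, so the closed non-trigger inequality passes to the limit; the paper instead simply evaluates \eqref{eq: decrease} at the candidate $y^*$.
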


\begin{proof} See Appendix \ref{app: z_behavior}.     
\end{proof}

\begin{remark}
This result shows that the auxiliary variable sequence $\{z^{[t]}\}$ undergoes only finitely many updates and does not necessarily converge to the optimal solution $y^*$, due to the stopping criterion \eqref{eq: decrease}.
By construction of Algorithm~\ref{alg: AC-ALADIN}, the Hessian approximation matrices $\{M_i\}$ are updated only at iterations when $z^{[t]}$ is updated. Therefore, once $z^{[t]}$ ceases to update at iteration $T$, the matrices $\{M_i\}$ remain fixed thereafter.
In this regime, Algorithm~\ref{alg: AC-ALADIN} reduces to DFC-ALADIN. Since the global convergence of DFC-ALADIN holds for any initial point (see Lemma~\ref{lemma 1}), the sequences $\{y^{[k]}\}$ and $\{x_i^{[k]}\}$ converge to the global minimizer $y^*$ for all $i \in \mathcal V$.
\end{remark}

The following theorem provides the convergence rate analysis of the auxiliary variable $z$.
\begin{theorem}%[Sublinear Decay Rate of the Minimum Update Magnitude of $z$]
\label{them: sublinear}
Suppose that Assumption~\ref{ass} holds for problem~\eqref{eq: DC}, and let problem~\eqref{eq: DC} be solved by Algorithm~\ref{alg: AC-ALADIN}. Then the minimum distance between consecutive updates of the auxiliary variable $z$ decays sublinearly, i.e.,
  \begin{equation}\label{eq: decrease theorem}\small
  \min_{1\leq t\leq T-1}\left\|z^{[t+1]} - z^{[t]}\right\| = O\left(\frac{1}{\sqrt{T}}\right),
\end{equation}
where $T$ denotes the total number of updates of the sequence $\{z^{[t]}\}$.
\end{theorem}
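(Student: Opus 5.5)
The plan is the standard telescoping argument used for sufficient-decrease methods. Every accepted update of $z$ satisfies the first condition of \eqref{eq: decrease}: at the iteration $k$ where $z^{[t+1]}=y^{[k+1]}$ is set,
\begin{equation*}
\frac{\gamma^{[t]}}{2}\left\|z^{[t+1]}-z^{[t]}\right\|^2 < \Phi\left(z^{[t]}\right)-\Phi\left(z^{[t+1]}\right), \qquad t=1,\dots,T-1.
\end{equation*}
So each accepted step buys a strictly positive decrease of the global objective $\Phi$. I will first record the standing facts needed. $\Phi$ is bounded below by Assumption~\ref{ass}: it is a sum of strongly convex functions, so it has the unique minimizer $y^*$ and $\Phi(z^{[t]})\geq \Phi(y^*)$ for all $t$. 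Moreover, the sequence $\{\Phi(z^{[t]})\}$ is monotonically nonincreasing, by the displayed inequality.

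Next I need a uniform positive lower bound on $\gamma^{[t]}$. For $t=1$ we have $\gamma^{[1]}=N$. For $t\geq 2$, $\gamma^{[t]}=\rho_{\min}\bigl(\sum_i M_i\bigr)$ with $M_i\approx\nabla^2 f_i(z^{[t]})\succ\mu_i I$, so $\gamma^{[t]}\geq \sum_{i}\mu_i$. Hence $\gamma^{[t]}\geq \underline\gamma\doteq\min\{N,\sum_i\mu_i\}>0$ for all $t$. Summing the displayed inequality over $t=1,\dots,T-1$ and telescoping then gives
\begin{equation*}
\frac{\underline\gamma}{2}\sum_{t=1}^{T-1}\left\|z^{[t+1]}-z^{[t]}\right\|^2 \leq \Phi\left(z^{[1]}\right)-\Phi\left(z^{[T]}\right)\leq \Phi\left(z^{[1]}\right)-\Phi(y^*).
\end{equation*}
The minimum of the nonnegative summands is at most their average. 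Therefore
\begin{equation*}
\min_{1\leq t\leq T-1}\left\|z^{[t+1]}-z^{[t]}\right\|^2\leq \frac{2\bigl(\Phi(z^{[1]})-\Phi(y^*)\bigr)}{\underline\gamma\,(T-1)},
\end{equation*}
and taking square roots yields \eqref{eq: decrease theorem} with constant $\sqrt{2(\Phi(z^{[1]})-\Phi(y^*))/\underline\gamma}$. Here $T-1\geq T/2$ for $T\geq 2$, and the statement is vacuous for $T=1$.

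The main obstacle is the uniform bound on $\gamma^{[t]}$. The algorithm only asks that $M_i$ \emph{approximate} the Hessian, and the bound needs $M_i\succeq\mu_i I$ to hold exactly. I will read the requirement $M_i\succ\mu_i I$ stated in step 5) of Algorithm~\ref{alg: AC-ALADIN} as an imposed constraint on the approximation, not merely a consequence of it. That makes $\underline\gamma$ independent of $T$. A further point to address is the meaning of $O(1/\sqrt T)$ when Theorem~\ref{thm:z_behavior} already says $T$ is finite. I will phrase the result as a bound holding uniformly for every run, with a constant independent of $T$; equivalently, it bounds the number of updates whose length exceeds a given threshold. This is consistent with the second condition of \eqref{eq: decrease}, since combined with the displayed bound it gives $T-1\leq 2(\Phi(z^{[1]})-\Phi(y^*))/(\underline\gamma\epsilon^2)$.
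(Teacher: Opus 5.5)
Your proposal is correct and matches the paper's proof: telescope the sufficient-decrease inequality over the accepted updates, bound $\gamma^{[t]}$ below uniformly, and bound the minimum squared step by the average. Your explicit $\underline\gamma=\min\{N,\sum_i\mu_i\}$ and your replacement of $\Phi(z^{[T]})$ by $\Phi(y^*)$ just pin down the constants $\gamma_{\min}$ and $D$ that the paper leaves implicit, which makes the constant genuinely independent of $T$.
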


\begin{proof} See Appendix \ref{app: sublinear}.
\end{proof}
Meanwhile, the convergence rates of $x_i$ and $y$ in Algorithm~\ref{alg: AC-ALADIN} during the intervals where $z$ remains unchanged are linear, as established in \cite{Du2025ACC}, and are omitted here for brevity. 
\section{Numerical Application:  Logistic Regression Problem}\label{sec: numerical}

%{\color{blue}Compare with
%Derivative-free C-ALADIN, RC-ALADIN, BFGS Consensus ALADIN \cite{Du2025ACC}, Consensus ADMM \cite{Boyd2011},  Linearized Consensus ADMM \cite{wu2025cocoa}, DQM\cite{DQM}, GIANT \cite{wang2018giant},     %GD \cite{nesterov2018lectures}

%Numerical example %\cite{kan2024lsemink}, %\url{https://github.com/KelvinKan/LSEMINK?utm_source=pro.scigpt.club}
%\cite{kovalev2020optimal}}

% All simulations are conducted using \texttt{Casadi-3.6.6} with \texttt{IPOPT} in \texttt{MATLAB R2024a} on a Windows $11$ system, equipped with a $2.1$ GHz AMD Ryzen $5$ $4600$U processor and $16$ GB of RAM.
In this section, we present numerical simulations to illustrate the performance of Algorithm~\ref{alg: AC-ALADIN} and to highlight its advantages over existing distributed optimization methods. We consider a distributed $\ell_2$-regularized logistic regression problem, following the experimental setup of \cite{kovalev2020optimal}. The regularized logistic regression problem satisfies Assumption~\ref{ass}, thereby providing a unified testbed for both theoretical validation and numerical evaluation.%Logistic regression is a standard benchmark in distributed optimization, and its regularized form guarantees that each local objective satisfies the strong convexity required by Assumption~\ref{ass}, thereby providing a unified testbed for both theoretical validation and numerical evaluation. \todo{compress last sentence -- verbose}

 The numerical experiments are based on the \texttt{LIBSVM} dataset \texttt{ijcnn1.tr}. From this dataset, $10,000$ samples are randomly selected and evenly distributed among $N = 100$ agents, with each agent holding $m = 100$ samples. For each agent $i$, let $a_{ij} \in \mathbb{R}^n$ denote the feature vector of the $j$-th sample and $b_{ij} \in \{-1, 1\}$ its corresponding label. The local data matrix $A_i \in \mathbb{R}^{m \times n}$ is defined as $A_i = [a_{i1}, a_{i2}, \dots, a_{im}]^\top$. The local objective function at agent $i$ is given by
\begin{equation} \label{eq:l2_logistic}\small
f_i(y) = \frac{1}{m} \sum_{j=1}^{m} \log\bigl(1 + \exp(-b_{ij} a_{ij}^\top y)\bigr) + \frac{r}{2}\|y\|^2,
\end{equation}
where $r = 10^{-3}$ is the regularization parameter and $y \in \mathbb{R}^n$ is the decision variable. For the \texttt{ijcnn1.tr} dataset, the feature dimension is $n = 22$.
%In our implementation,  %{\color{blue} (e.g., $d=22$ for ijcnn1, $d=122$ for a6a, and $d=300$ for w6a).}
All simulations are performed using \texttt{CasADi 3.6.6} with the \texttt{IPOPT} solver in \texttt{MATLAB R2024a} on a Windows 11 system equipped with a 2.1 GHz AMD Ryzen 5 4600U processor and 16 GB of RAM. Both the optimal solution $y^*$ and the optimal objective value $\Phi(y^*)$ are computed via \texttt{CasADi}, where we solve the corresponding centralized problem to obtain a high‑accuracy reference solution. These serve as benchmarks for evaluating the performance of the distributed algorithms. 
% We evaluate the convergence performance of the proposed Algorithm \ref{alg: AC-ALADIN}, Algorithm~\ref{alg: AC-ALADIN} with \eqref{eq: hessian}and Algorithm~\ref{alg: AC-ALADIN} with BFGS against DFC-ALADIN\cite{Du2025ACC}, RC-ALADIN\cite{Du2025ACC}, C-ADMM\cite{boyd2011distributed}, DQM\cite{DQM}, and GIANT\cite{wang2018giant} on the \texttt{ijcnn1} dataset.
% All algorithms are initialized from the same 
% starting point $y^{[1]} = \mathbf{0}$, with dual variables 
% $\lambda_i^{[1]} = \mathbf{0}$. For C-ADMM and DQM, 
% the penalty parameter is fixed to $\rho = 1$, respectively, and both results are reported; since DQM is 
% inherently decentralized, we implement a coordinator-based version for fair 
% comparison. For GIANT, the approximation parameter is set to 
% $\epsilod = 1$. For DFC-ALADIN, the Hessian approximation is 
% fixed as
% \begin{equation} \label{eq: hessian}
%     M_i = \mathrm{diag}\!\left(\frac{1}{4m}A_i^\top A_i\right) + rI + \varepsilon I, 
%     \quad \varepsilod = 10^{-4},
% \end{equation}
% where $A_i \in \mathbb{R}^{m\times d}$ is the local data matrix of agent 
% $i$, and the factor $\frac{1}{4}$ arises from the global upper bound 
% $\sigma(1-\sigma)\leq \frac{1}{4}$ on the logistic loss Hessian. For 
% RC-ALADIN, $M_i = I$. For the proposed CAPTAIN, we 
% initialize $M_i = I$, $\gamma^{[1]} = n$, and $z^{[1]} = \mathbf{0}$. 

\begin{figure}[htbp]
    \centering
\includegraphics[width=0.46\textwidth,height=0.32\textheight]{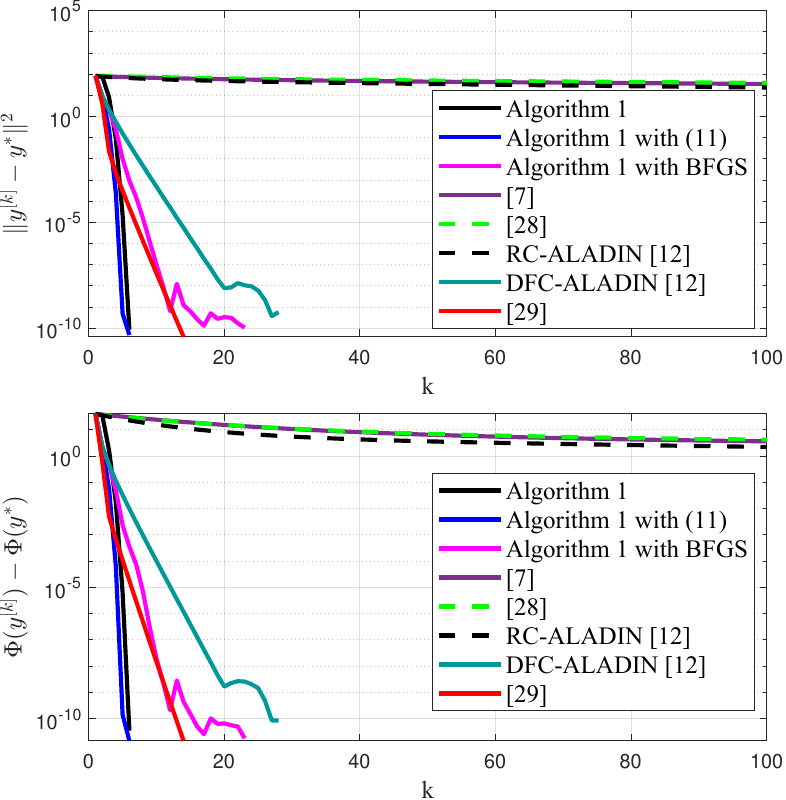}
    \caption{Performance comparison of Algorithm~\ref{alg: AC-ALADIN}, Algorithm~\ref{alg: AC-ALADIN} with \eqref{eq: hessian}, Algorithm~\ref{alg: AC-ALADIN} with BFGS, Consensus ADMM \cite{boyd2011distributed}, DQM\cite{DQM}, RC-ALADIN, DFC-ALADIN \cite{Du2025ACC},  GIANT \cite{wang2018giant}     on the \texttt{ijcnn1.tr} dataset.}
    \label{fig:fig1}
\end{figure}

We evaluate the convergence performance of the proposed Algorithm~\ref{alg: AC-ALADIN} against several baseline methods on the \texttt{ijcnn1} dataset: Consensus ADMM~\cite{boyd2011distributed}, DQM~\cite{DQM}, DFC-ALADIN~\eqref{eq: CALADIN}, RC-ALADIN~\cite{Du2025ACC} (a special case of DFC-ALADIN with $M_i = I$ in \eqref{eq: CALADIN}), and GIANT~\cite{wang2018giant}. The former are based on dual decomposition, while GIANT is a representative primal decomposition based algorithm with superior performance.
 All algorithms are initialized from the same starting point $y^{[1]} = \mathbf{0}$, with dual variables $\lambda_i^{[1]} = \mathbf{0}$. Algorithm~\ref{alg: AC-ALADIN} additionally uses $M_i = I$, $\gamma^{[1]} = N$, and $z^{[1]} = \mathbf{0}$. Consensus ADMM employs first-order primal-dual updates with the augmented Lagrangian parameter $\rho = 1$. DQM adopts a quadratic approximation for local solutions based on Consensus ADMM; it introduces local second‑order approximations but lacks global Hessian information. The original implementation in \cite{DQM} is decentralized; for a fair comparison, we use a version with a coordinator. DFC-ALADIN uses a data‑driven Hessian initialization\footnote{Let $\sigma(t)$ denote the sigmoid function \cite{bishop2006pattern}, and define $\sigma_{ij} := \sigma(b_{ij} a_{ij}^\top y)$. Then the Hessian of $f_i(y)$ can be written as $\nabla^2 f_i(y) = \frac{1}{m} A_i^\top D_i A_i + rI$, where $D_i = \mathrm{diag}(\sigma_{ij}(1 - \sigma_{ij}))$. Since $\sigma(t)(1 - \sigma(t)) \le \frac{1}{4}$ for all $t \in \mathbb{R}$, the Hessian admits the bound $\nabla^2 f_i(y) \preceq \frac{1}{4m} A_i^\top A_i + rI$.}: 
\begin{equation} \label{eq: hessian}\small
M_i = \frac{1}{4m}A_i^\top A_i + rI, 
\end{equation} 
which provides a global upper bound on the exact Hessian of $f_i$, capturing curvature without per‑iteration recomputation. The parameter settings of Algorithm~\ref{alg: AC-ALADIN} are the same as those of RC-ALADIN, except that we additionally set $\epsilon = 10^{-12}$, which is a triggering threshold not present in RC-ALADIN.
Fig.~\ref{fig:fig1} presents the numerical comparison. Consensus ADMM~\cite{boyd2011distributed} exhibits slow convergence due to the absence of second‑order information. DQM~\cite{DQM} employs local second‑order approximations but lacks global curvature information, which limits its speed. RC-ALADIN~\cite{Du2025ACC} shares the same setting as Consensus ADMM, resulting in slow convergence. In contrast, DFC-ALADIN~\eqref{eq: CALADIN} converges faster due to its data‑driven Hessian initialization \eqref{eq: hessian}. Despite lacking dual variables and global Hessian information, GIANT~\cite{wang2018giant} applies exact local Hessian averaging and converges faster than the aforementioned methods in this case. Algorithm 1 (shown as the black solid line) reaches high accuracy in significantly fewer iterations than the competing algorithms. This is attributed to its use of global Hessian information and dual variable updates, which provide accurate curvature approximation and effective coordination. Two enhanced variants further improve performance. The first adopts the Hessian initialization \eqref{eq: hessian} (blue curve), which accelerates convergence. The second employs a Broyden-Fletcher-Goldfarb-Shanno (BFGS)‑based local Hessian reconstruction (cyan solid line) inspired by \cite[Algorithm 2]{Du2025ACC}; it reduces the per‑iteration communication complexity from $\mathcal{O}(Nn^2)$ to $\mathcal{O}(Nn)$ while maintaining favorable convergence (see \cite[Remark 1]{Du2025ACC}). Both variants are consistent with the global convergence results of Section~\ref{sec: convergence}.

Fig.~\ref{fig:fig2} shows the convergence of $\left\|z^{[t+1]} - z^{[t]}\right\|$. The observed sublinear decay is consistent with the theoretical sublinear rate $O\left(\frac{1}{\sqrt{T}}\right)$ established in Section~\ref{sec: convergence}. Moreover, the diminishing updates indicate that Hessian updates are triggered only finitely many times under the sufficient decrease condition, ensuring both efficiency and stability. 
It is worth noting that the discrepancy of the black curve occurs at the first iteration, which is caused by the different Hessian initialization adopted by Algorithm \ref{alg: AC-ALADIN}.
\begin{figure}[htbp]
    \centering
\includegraphics[width=0.52\textwidth,height=0.24\textheight]{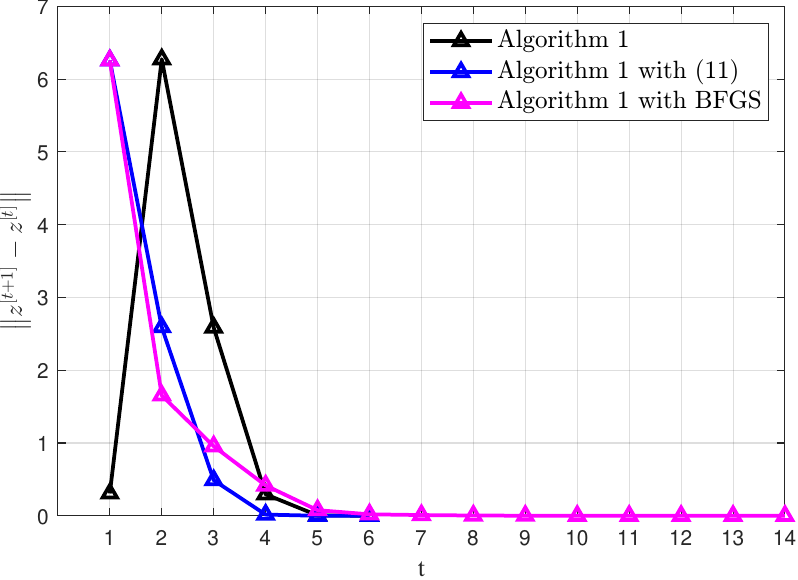}
    \caption{Convergence of $\left\|z^{[t+1]}-z^{[t]} \right\|$ of Algorithm \ref{alg: AC-ALADIN}, Algorithm~\ref{alg: AC-ALADIN} with \eqref{eq: hessian}, Algorithm~\ref{alg: AC-ALADIN} with BFGS.}
    \label{fig:fig2}
\end{figure}

\section{Conclusion}

In this paper, we propose CAPTAIN, a novel distributed optimization algorithm for DCO problems. It integrates DFC‑ALADIN with a finite second‑order information update strategy by introducing an auxiliary variable. We show that the proposed algorithm guarantees global convergence to the optimal solution of convex DCO problems. Moreover, the auxiliary variable exhibits a sublinear convergence rate. Numerical simulations confirm our theoretical findings and demonstrate the efficiency of the algorithm. 

Future work will focus on accelerating CAPTAIN's convergence, reducing its communication and computational costs, and scaling it to high-dimensional problems. We will also extend the method to decentralized settings, e.g., asynchronous updates over directed networks.

\appendices
\section{Proof of Theorem \ref{thm:z_behavior}} \label{app: z_behavior}
% The global convergence of $\{z^{[t]}\}$ sequence can be established in two scenarios. First, if the sufficient decrease condition \eqref{eq: decrease} is never satisfied, then $z^{[t]}$ is never updated. 
% Second, if the condition is satisfied at some iterations, each accepted update of $z^{[t]}$ yields a strictly positive decrease in $\Phi$ (see \eqref{eq: DC}), because $\gamma^{[t]} > 0$ and $\|y^{[k+1]} - z^{[t]}\|^2 > 0$ for any actual update. Since $\Phi$ is bounded below by Assumption~\ref{ass}, an infinite sequence of updates is impossible. Hence, $\{z^{[t]}\}$ is updated only finitely many times. This proves the first statement. 

The global convergence of $\{z^{[t]}\}$ follows from two scenarios.
First, if \eqref{eq: decrease} never holds, the auxiliary variable $z$ is never updated.
Second, if \eqref{eq: decrease} holds at some iteration, the update rule implies $\|y^{[k+1]}-z^{[t]}\|\ge\epsilon$. Consequently,
\begin{equation*}
    \Phi(z^{[t+1]}) = \Phi(y^{[k+1]}) < \Phi(z^{[t]}) - \frac{\gamma^{[t]}}{2}\epsilon^2.
\end{equation*}
By Assumption~\ref{ass}, $M_i\succeq\mu_i I\succ0$, so $\gamma^{[t]}\ge\gamma_{\min}>0$ and $\epsilon>0$ is a fixed positive constant. Hence each accepted update decreases $\Phi$ by at least $\frac{\gamma_{\min}}{2}\epsilon^2>0$.
Since $\Phi$ is bounded below (Assumption~\ref{ass}), only finitely many updates can occur. Thus $\{z^{[t]}\}$ is updated finitely many times, proving the first statement of Theorem~\ref{thm:z_behavior}.

Let $z^{[T]}$ be the last updated auxiliary variable before termination. Three mutually exclusive situations may occur:
\begin{enumerate}
    \item \emph{Final update to the optimum.} If the sufficient decrease condition \eqref{eq: decrease} holds at $y^*$, i.e.,
    \begin{equation}\label{eq: proof_final_update}\small
        \Phi(y^*) < \Phi(z^{[T-1]}) - \frac{\gamma^{[T-1]}}{2} \left\|y^* - z^{[T-1]}\right\|^2,
    \end{equation}
    then the algorithm performs one additional update, sets $z^{[T]} = y^*$, and terminates. In this case, $z^{[T]}$ is the last auxiliary variable.

    \item \emph{Termination due to a small step.} If $\left\|y^{[k+1]} - z^{[T]}\right\| < \epsilon$, then no further update occurs.

    \item \emph{Termination due to violation of the decrease condition.} If the sufficient decrease condition fails, i.e.,
    \begin{equation}\label{eq: proof_final_stop}\small
        \Phi(y^*) \ge \Phi(z^{[T]}) - \frac{\gamma^{[T]}}{2} \left\|y^* - z^{[T]}\right\|^2,
    \end{equation}
    then no further update occurs, and the distance to the optimum satisfies inequality \eqref{eq:z_neighborhood}.
\end{enumerate}
Thus, the second statement of Theorem~\ref{thm:z_behavior} is proved.

% Finally, from the first statement, the sequence $\{z^{[t]}\}$ is updated only finitely many times. Hence, there exists an index $T$ such that for all $t \ge T$, $z^{[t]} = z^{[T]}$, and the Hessian approximations $M_i$ remain fixed thereafter. In this case, Algorithm~\ref{alg: AC-ALADIN} reduces to DFC-ALADIN. Consequently, by Lemma~\ref{lemma 1}, the global variable $y$ and the local variables $x_i$ ($i \in \mathcal V$) generated by the algorithm converge to the global optimum $y^*$ at a linear rate. This completes the proof of the third statement, and thus establishes the global convergence of Algorithm~\ref{alg: AC-ALADIN}.

\section{Proof of Theorem \ref{them: sublinear}} \label{app: sublinear}

In Algorithm~\ref{alg: AC-ALADIN}, whenever the sufficient decrease condition \eqref{eq: decrease} is triggered, we set the auxiliary variable \(z^{[t+1]} = y^{[k+1]}\). This yields the inequality
\begin{equation}\label{eq: decrease2}\small
    \Phi\bigl(z^{[t+1]}\bigr) < \Phi\bigl(z^{[t]}\bigr) - \frac{\gamma^{[t]}}{2}\left\|z^{[t+1]} - z^{[t]}\right\|^2.
\end{equation}
Summing \eqref{eq: decrease2} telescopically for \(t = 1,\dots, T-1\) gives
\begin{equation}\label{eq: decrease3}\small
    \sum_{t=1}^{T-1} \frac{\gamma^{[t]}}{2}\left\|z^{[t+1]} - z^{[t]}\right\|^2 < \Phi\bigl(z^{[1]}\bigr) - \Phi\bigl(z^{[T]}\bigr).
\end{equation}
Since \(0 < \gamma_{\min} \leq \min_t \gamma^{[t]}\), consequently,
\begin{equation}\label{eq: decrease4}\small
    \min_{1\leq t\leq T-1}\left\|z^{[t+1]} - z^{[t]}\right\|^2 < \frac{2}{\gamma_{\min}(T-1)}\Bigl(\Phi\bigl(z^{[1]}\bigr) - \Phi\bigl(z^{[T]}\bigr)\Bigr).
\end{equation}
The right‑hand side of \eqref{eq: decrease4} is finite; therefore \eqref{eq: decrease theorem} holds with \(O\bigl(\frac{1}{\sqrt{T}}\bigr) = \sqrt{\frac{2D}{\gamma_{\min}(T-1)}}\), where \(D \geq \Phi(z^{[1]}) - \Phi(z^{[T]})\). This establishes a sublinear decay rate of the minimum distance between consecutive updates.

\bibliographystyle{IEEEtran}   % 选择合适的参考文献样式
\bibliography{reference}      % 指定你的.bib文件

\end{document}